\newtheorem{theorem}{Theorem}
\begin{document}

\title{On The Continuous Coverage Problem for a Swarm of UAVs}

\author[1]{Hazim Shakhatreh}
\author[1]{Abdallah Khreishah}
\author[2]{Jacob Chakareski}
\author[3]{Haythem Bany Salameh}
\author[4]{Issa Khalil}
\affil[1]{Department of Electrical and Computer Engineering, New Jersey Institute of Technology}
 \affil[2]{Department of Electrical and Computer Engineering, University of Alabama}
\affil[3]{Department of Telecommunications Engineering, Yarmouk University}
\affil[4]{Qatar Computing Research Institute}

\maketitle

\begin{abstract}
Unmanned aerial vehicles (UAVs) can be used to provide wireless network and remote surveillance coverage for disaster-affected areas. During such a situation, the UAVs need to return periodically to a charging station for recharging, due to their limited battery capacity. We study the problem of minimizing the number of UAVs required for a continuous coverage of a given area, given the recharging requirement. We prove that this problem is NP-complete. Due to its intractability, we study partitioning the coverage graph into cycles that start at the charging station. We first characterize the minimum number of UAVs to cover such a cycle based on the charging time, the traveling time, and the number of subareas to be covered by the cycle. Based on this analysis, we then develop an efficient algorithm, the cycles with limited energy algorithm. The straightforward method to
continuously cover a given area is to split it into N subareas
and cover it by N cycles using N additional UAVs.
 Our simulation results examine the importance of critical system parameters: the energy capacity of the UAVs, the number of subareas in the covered area, and the UAV charging and traveling times. We demonstrate that the cycles with limited energy algorithm requires 69\%-94\% fewer additional UAVs relative to the straightforward method, as the energy capacity of the UAVs is increased, and 67\%-71\% fewer additional UAVs, as the number of subareas is increased.
\end{abstract}

\begin{IEEEkeywords}
Unmanned aerial vehicles, charging, continuous coverage.
\end{IEEEkeywords}

\section{Introduction}
\label{sec:Introduction}
In 2005, Hurricane Katrina in the United States caused over 1,900 deaths, 3 million land-line phone interruptions, and more than 2,000 base stations going out of service~\cite{bupe2015relief,miller2006hurricane,townsend2006federal}. Another example of a large-scale interruption of telecommunications service is the World Trade Center attack in 2001, when it took just minutes for the nearby base stations to be overloaded. The attacks caused the disturbance of a phone switch with over 200,000 lines, 20 cell sites, and 9 TV broadcast stations~\cite{bupe2015relief,noam2004world}. These incidents demonstrate the need for a quick/efficient deployment network for emergency cases. 

The authors in ~\cite{dalmasso2012wimax} proposed a UAV-based replacement network during disasters, where the UAVs serve as aerial wireless base stations. However, this study did not consider how the UAVs will guarantee a continuous coverage when they need to return to the charging station for recharging. Though a UAV has limited energy capacity and needs to recharge its battery before running out of power during the coverage process, only few studies have considered this constraint in the UAV coverage problem. Concretely, the author in~\cite{ha2010uav} determined the minimum number of UAVs that can provide continuous 
	\begin{figure}[!t]
		\centering
		\includegraphics[scale=0.26]{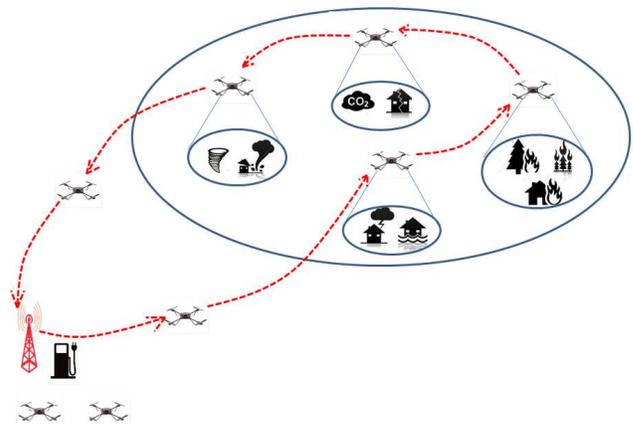}
		\caption{The Continuous Coverage Problem}
		\label{fig:system}
	\end{figure}
coverage for a single area using UAVs with uniform and non-uniform energy capacity. However, no consideration has been made for the case when there are multiple subareas that need to be covered, which is the typical scenario during disasters. The authors in \cite{wu2014collaborative,zhang2015collaborative} formulated the Mobile Charging Problem, in which multiple mobile chargers collaborate to charge static sensors with minimum number of mobile chargers subject to speed and energy limits of the mobile chargers, such that the energy consumption of the mobile chargers is minimized. Note that the mobile charging problem is different from the problem we study. In the mobile charging problem, the chargers will not cover the sensors continuously. The mobile charger will visit the sensor and stay for a specific time to charge the sensor and after finishing the charging process, it will visit other sensors. 

In~\cite{pugliese2015modelling}, the authors studied the continuous coverage problem for mobile targets, where during the coverage process a UAV that runs out of energy is replaced by a new one. Many studies~\cite{mozaffari2015drone,mozaffari2016optimal,zorbas2013energy,ceran2015optimal} focused on minimizing the total transmission power of the UAVs during the coverage of a geographical area, however, no limits on the UAV energy capacity and the need for recharging have been considered. The work in \cite{guptasurvey} reported that the energy consumption during data transmission and reception is much smaller than the energy consumption during the UAV hovering, i.e., it only constitutes 10$\%$-20$\%$ of the UAV energy capacity. Thus, it is important to conduct studies that take into account the energy consumption during the UAV hovering rather than focusing on minimizing the energy consumption during data transmission and reception.

Contrary to the related work above, we integrate the recharging requirements into the coverage problem and examine the minimum number of required UAVs for enabling continuous coverage under that setting (see Figure~\ref{fig:system}). To the best of our knowledge, this is the first study that jointly considers the coverage and recharging problems
where multiple subareas are to be covered. Our main contributions in this context are: 1) We formulate the problem of minimizing the number of UAVs required to provide continuous coverage of a given area, given the recharging requirement. 2) We prove that this problem is NP-complete. 3) Due to the intractability of the problem, we study partitioning the coverage graph into cycles that start at the charging station. 4) Based on this analysis, we develop an efficient algorithm.

The rest of this paper is organized as follows. Section~\ref{sec:system_model} presents our system model including the problem formulation and the proof of the NP completeness. In Section~\ref{sec:heurisitic}, we show how to find the minimum number of additional UAVs that are required to guarantee the continuous coverage. Then, we present our proposed algorithm. In Section~\ref{sec:Experiments}, we present our experimental results. Finally, Section~\ref{sec:Conclusion} concludes the paper.

\section{System Model}
\label{sec:system_model}
\subsection{Problem Statement}
\label{subsec:problem statement}

Consider a geographical area $G$=\{$g_{1}$,...,$g_{N}$\}, where $g_{i}$ represents a subarea $i$, the subarea $g_{1}$ $\in$ $G$ includes the charging station and all subareas except subarea $g_{1}$ need to be covered $G_{0}$=$G$ $\backslash$ $g_{1}$. We want to find the minimum number of UAVs that can provide a continuous coverage over $G_{0}$ by placing the UAVs at locations where each UAV will provide full coverage for one subarea. In the continuous coverage problem, we assume: (1) Time slotted system in which the slot duration is 1 time unit and the total coverage duration is $T$. (2) All UAVs start the coverage process from the charging station and they need to return to the charging station after they complete the coverage process. (3) Each UAV has limited energy capacity $E$ and it needs to return to the charging station to recharge the battery before running out during the coverage process. (4) Each UAV can move (from the charging station to location $i$), (from location $i$ to location $j$) or (from location $j$ to the charging station) and this process will take one time slot. (5) Each UAV covers a given subarea for one or multiple time slots. (6) Each subarea will be covered by only one UAV. (7) The UAV cannot travel to the charging station or to any other location until the handoff process is completed in which another UAV arrives to cover the subarea such that the continuous coverage is guaranteed. (8) The recharging process takes $T_{charge}$ at the charging station.

\subsection{Problem Formulation}
\label{subsec:problem_formlation}
Now, we formulate the continuous coverage problem. In order to present the problem formulation, we introduce the  binary variable $x_{m}$ that takes the value of 1 if the UAV $m$ visits any subarea from charging station during the coverage duration $T$ and equals 0 otherwise; the binary variable $y_{ij,m}^t$ that takes the value of 1 if the if the UAV $m$ moves through edge $ij$ during the time slot $t$ and equals 0 otherwise; the binary variable $z_{j,m}^t$ that takes the value 1 if the UAV $m$ covers the subarea $j$ at time slot $t$ and equals 0 otherwise. Table I provides a list of the major notations used in this paper.

\begin{table}[!h]
	\scriptsize
	\renewcommand{\arraystretch}{1.5}
	\caption{A List of Notations}
	\label{table}
	\centering
	\begin{tabular}{|c|l|}
		\hline
		$M$ & The set of fully charged UAVs available at the charging\\& station.\\
		\hline
		$E$ & The energy capacity of each UAV $m$ $\in$ $M$.\\
		\hline 
		$T$ & The coverage duration.\\
		\hline
		$E_{ij}^{Travel}$& The energy consumed when the UAV travels from subarea
		\\
		&$i$ to subarea $j$ where $i,j\in G$.\\
		\hline
		$E_{j}^{Cover}$ & The energy consumed when the UAV covers the subarea $j$ \\
		&for one time slot where $j\in G$ (constant).\\
		\hline
		$T_{charge}$ & The time that the UAV needs to recharge the battery at\\ &the charging station.\\
		\hline
	\end{tabular}
\end{table}
\begin{equation*}
	\begin{array}{ll}
		\displaystyle \min \displaystyle \sum_{m\in M}^{ } x_m  \\
		\textrm{subject to}\\
		y_{ij,m}^t  \leq x_m ~~~~~~ \forall i,j \in G,\forall t \in [0,T],
		\forall m\in M&(1) \\
		
		z_{j,m}^t \leq  x_m ~~~~~~~ \forall j \in G_{0},\forall t \in (0,T),\forall m\in M &(2)\\
		
		\displaystyle \sum_{m\in M}^{ } y_{1j,m}^0 = 1 ~~ \forall j \in G_{0}&(3)\\
		
		\displaystyle \sum_{m\in M}^{ } z_{j,m}^t =  1 ~~~ \forall j \in G_{0},\forall t \in (0,T)&(4)\\
		
		\displaystyle \sum_{i\in G,i\neq j} \displaystyle \sum_{m\in M}^{ } y_{ij,m}^t  \leq  1 ~~~~~~ \forall j \in G_{0},\forall t \in [0,T]&(5)\\
		
		\displaystyle \sum_{i_{1}\in G}^{ } y_{i_{1}j,m_{1}}^t  =  \displaystyle \sum_{i_{2}\in G}^{ } y_{ji_{2},m_{2}}^{t+1}  ~ \forall j \in G_{0},\\
		~~~~~~~~~~~~~~~~~~~~~~~~~~~~~~~~~\forall t \in (0,T),m_{1}\neq m_{2}&(6)\\
		
		\displaystyle \sum_{m\in M} \displaystyle \sum_{t\in [0,T)}^{ } \displaystyle \sum_{i\in G}^{ } y_{ij,m}^t \leq   \displaystyle \sum_{m\in M} \displaystyle \sum_{t\in (0,T)}^{ } z_{j,m}^t \\ 
		~~~~~~~~~~~~~~~~~~~~~~~~~~~~~~~~~\forall j \in G_{0}&(7)\\
		
		\displaystyle \sum_{j\in G_{0}}^{ }\displaystyle \sum_{\tau\in T_{charge}}^{ } [y_{j1,m}^t+y_{1j,m}^{t+\tau}]  \leq  1 \\~~~~~~~~~~~~~~~~~~~~~~~~~~~~~~~~ \forall m\in M,\forall t \in (0,T)&(8)\\
		
		\displaystyle \sum_{m\in M} \displaystyle \sum_{t\in [0,T]}^{ } \displaystyle \sum_{j\in G_{0}}^{ } y_{1j,m}^t  = \displaystyle \sum_{m\in M} \displaystyle \sum_{t\in [0,T]}^{ } \displaystyle \sum_{i\in G_{0}}^{ } y_{i1,m}^t&(9)\\
		
		\displaystyle \sum_{t\in [t_{1},t_{2}]}^{ } \displaystyle \sum_{i,j\in G}^{ } E_{ij}^{Travel} y_{ij,m}^t  +  \displaystyle \sum_{t\in [t_{1},t_{2}]}^{ } \displaystyle \sum_{j\in G}^{ } E_{j}^{Cover} z_{j,m}^t \\ \leq  
		E ~~~~~ \forall m\in M,\forall [t_{1},t_{2}]\in [0,T],t_{1}=arg~y_{1j,m}^t\\
		~~~~~~~~~~~t_{2}=arg~y_{i1,m}^t,t_{2} > t_{1}, \forall t_{3}\in(t_{1},t_{2}) \\
		~~~~~~~~~~~t_{3} \neq arg~y_{1j,m}^t \neq arg~y_{i1,m}^t &(10)
	\end{array}
\end{equation*}
 The objective is to minimize the number of UAVs that are needed to provide a continuous coverage during the coverage duration $T$ subject to various design constraints. Constraints (1) and (2) ensure that the UAV can travel and cover the subareas only if we select it to participate in the coverage process. Constraint (3) ensures that all subareas will be covered at the first time slot. Constraint (4) guarantees the continuous coverage for each subarea. Constraint (5) allows the UAV to visit a new subarea (when $y_{ij,m}^t$=1) or to continue covering the current subarea (when $y_{ij,m}^t$=0). Constraint (6) characterizes the handoff process between the UAVs, when the UAV $m_{1}$ wants to visit the subarea $j$ from subarea $i_{1}$ at time $t$ ($y_{i_{1}j,m_{1}}^t$=1), the UAV $m_{2}$ that covers the subarea $j$ will travel to subarea $i_{2}$ at time $t+1$ ($y_{ji_{2},m_{2}}^{t+1}$=1). Constraint (7) describes the relation between the traveling process and the covering process, where the number of times that the subarea $j$ is covered will be greater than or equal the number of times that it is visited. Constraint (8) shows that the recharging process will take $T_{charge}$ at the charging station. Constraint (9) ensures that the number of UAVs outgoing from the charging station and the number of UAVs incoming to charging station are the same after we complete the coverage process. Constraint (10) shows that the energy capacity of the UAV can cover the wasted energy during the traveling and the covering processes in each cycle where $t_{1}$ represents the time that the UAV travels from the charging station with full energy capacity and $t_{2}$ represents the time that the UAV arrives to the charging station to charge the battery. Now we will prove that the continuous coverage problem is an NP-complete.
   \vspace{-3mm}
\subsection{NP completeness}
\label{subsec:NP_complete}
\begin{theorem}
\label{theorem_one}	
 The Continuous Coverage Problem is NP-complete.
\end{theorem}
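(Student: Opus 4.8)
The plan is to prove the two directions of NP-completeness in turn: membership in NP, and then NP-hardness via a polynomial-time reduction from a classical NP-complete problem.

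\emph{Membership in NP.} I would phrase the problem in its decision form: given the instance data together with an integer $k$, is there a feasible continuous coverage that uses at most $k$ UAVs? A polynomial-size certificate is then a complete schedule, i.e.\ an assignment of the binary variables $x_m$, $y_{ij,m}^{t}$, $z_{j,m}^{t}$ for $m\in M$, $i,j\in G$, $t\in[0,T]$. Verifying the certificate is exactly checking that constraints (1)--(10) hold and that $\sum_{m} x_m \le k$; since each constraint is an inequality between sums of at most polynomially many $0/1$ terms indexed by $M$, $G$, and the slots in $[0,T]$, this runs in polynomial time. The one subtlety is the encoding of $T$, $E$, $T_{charge}$: I would state up front that these are presented so that an optimal schedule has a polynomial-size description (e.g.\ $T$ polynomially bounded in $N$, or the schedule recorded as a single period of a periodic pattern), under which the problem clearly lies in NP.

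\emph{NP-hardness.} Constraint (10) caps, by the budget $E$, the total travel-plus-coverage energy a single UAV may expend between two consecutive charging-station visits; consequently the energy a recharging cycle can devote to coverage is bounded, which forces the set of subareas served by one cycle to be ``small'' in an additive sense. This is precisely the structure of a packing problem, so I would reduce from Bin Packing (equivalently from Partition; from 3-Partition if one wants strong NP-hardness). Given item sizes $a_1,\dots,a_n$, bin capacity $C$, and target number of bins $k$, construct $G$ with the charging station $g_1$ and one subarea $g_i$ per item; encode $a_i$ as the per-slot coverage energy of $g_i$; choose the graph and the travel energies $E_{ij}^{Travel}$ so that the only energetically feasible excursions are those that leave and return to $g_1$ and whose served subareas have total coverage cost at most $E$; and fix $E$, the coverage durations, $T_{charge}$, and $T$ so that each recharging cycle contributes a number of UAVs equal to a common fixed overhead plus a term independent of how the subareas within it are grouped. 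Then minimizing the number of UAVs amounts to minimizing the number of recharging cycles subject to the per-cycle energy bound, i.e.\ to packing the items into the fewest capacity-$C$ bins; the constructed instance is a yes-instance for the corresponding threshold if and only if the Bin Packing instance is. The reverse direction is obtained by reading a feasible packing off of any feasible schedule, grouping the subareas according to which UAV's recharging rotation covers them.

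\emph{Main obstacle.} The delicate part is designing the gadget so that \emph{every} feasible schedule --- not merely an optimal one --- decomposes into recharging cycles that faithfully reflect a packing. One must rule out schedules in which a single UAV threads through many subareas across the horizon, or in which a subarea is covered by one UAV on part of $[0,T]$ and by another on the rest via clever use of the handoff constraint (6) and the flow constraints (5)--(7), or in which cycles overlap in subareas over disjoint time windows; and all of this must coexist with keeping $T$ and $T_{charge}$ polynomially bounded. By comparison, NP membership and the high-level yes/no correspondence are routine. (If one prefers instead to emphasize the routing nature of the problem, the same template goes through with Hamiltonian Cycle or the decision version of TSP in place of Bin Packing, replacing the capacity gadget by a distance gadget on the edges.)
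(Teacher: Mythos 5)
Your proposal takes essentially the same route as the paper: a reduction from Bin Packing in which bins correspond to UAVs, items to subareas, and the bin capacity to the per-cycle energy budget $E$ of constraint (10), together with the routine certificate-checking argument for membership in NP. The ``main obstacle'' you flag --- forcing \emph{every} feasible schedule to decompose into packing-like cycles --- is exactly what the paper resolves by specializing to discrete coverage (each subarea visited once by one UAV), making the visit-plus-cover energy constant, setting the return-travel energy to zero, and taking $T_{charge}=\infty$ so that each UAV performs exactly one excursion and therefore corresponds to exactly one bin.
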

\begin{proof}
The number of constraints is polynomial in terms of the number of subareas, the number of UAVs and the number of time slots. Given any solution for our problem, we can check the
 solution\textquoteright s feasibility in polynomial time, then the problem is NP.\\
To prove that the problem is NP-hard, we reduce the Bin Packing Problem which is NP-hard~\cite{korf2002new} to a special case of our problem. The special case will be the discrete coverage problem in which each subarea will be visited one time by one UAV during the coverage process. In the Bin Packing Problem, we have $p$ items where each item has volume $z_p$. All items must be packed into a finite number of bins ($b_1$, $b$,...,$b_B$), each of volume $V$ in a way that minimizes the number of bins used. The reduction steps are: 1) The $b$-th bin in the Bin Packing Problem is mapped to the $m$-th UAV in our problem (where the volume $V$ for each bin is mapped to the energy capacity of the UAV $E$). 2) The $p$-th item is mapped to the $n$-th subarea, (where the volume for each item $p$ is mapped to the energy consumed when the UAV (visits and covers) subarea $n$. 3) All UAVs have the same energy capacity $E$. 4) The energy consumed (during the traveling and the covering processes) when the UAV visits subarea $j$ from any subarea ($i$ $\in$ $G$ $\backslash$ \{j\}) will be constant. 5) The energy required for the UAV to return to the charging station from any subarea $i$ will be zero ($E_{i1}^{Travel}$=0). 6) The time that the UAV needs to recharge the battery at the charging station will be infinity. 7) Each subarea will be visited one time by one UAV during the coverage process (discrete coverage). If there exists a solution to the bin packing problem with cost $C$, then the selected bins will represent the UAVs that are selected and the items in each bin will represent the subareas that the UAV must visit and the total cost of our problem is $C$. If there exists a solution to our problem with cost $C$, then the selected UAVs will represent the bins that are selected and the subareas that the UAV must visit will represent the items in each bin and the total cost of the bin packing problem is $C$. We prove that there exists a solution to the bin packing problem with cost $C$ iff there exists a solution to our problem with cost $C$.\qedhere
\end{proof}
\section{Heuristic algorithm} 
\label{sec:heurisitic}
Due to the intractability of the problem, we study partitioning the coverage graph into cycles that start at the charging station. We first characterize the minimum number of UAVs to cover each cycle based on the charging time, the traveling time, and the number of subareas to be covered by the cycle. Our analysis based on the uniform coverage in which the UAV covers each subarea in a given cycle for a constant time. Based on this analysis, we then develop an efficient algorithm, the cycles with limited energy algorithm, that minimizes the required number of UAVs that guarantees a continues coverage.
\vspace{-2mm}
\subsection{Analysis}
\label{subsec:analysis}
It is obvious that we need $N$ UAVs to cover $N$ subareas at  any given time, but the question here is how many additional UAVs are needed to guarantee a continuous coverage. In this subsection, we assume that the UAV visits the subareas based on a cycle that starts from the charging station and ends at the charging station for charging process. We also assume that a given UAV covers the subareas in the cycle uniformly, in which the UAV covers each subarea in a given cycle for a constant time.
 In Theorem~\ref{theorem_two}, we find the minimum number of additional UAVs that are needed to guarantee a continuous coverage for a cycle, which will help us while developing Algorithm 1.
\begin{theorem}
	\label{theorem_two}
The minimum number of additional UAVs $k$ that are required to provide continuous and uniform coverage for a cycle that contains $n$ subareas must satisfy this inequality:
\end{theorem}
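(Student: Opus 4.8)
The plan is to analyze a single cycle $C$ in isolation. Fix the cycle $g_1 \to g_{i_1} \to \cdots \to g_{i_n} \to g_1$ through the charging station $g_1$ and the $n$ subareas it must cover, and let $t$ denote the constant number of slots a UAV spends covering any one of these subareas (the uniform-coverage assumption). First I would write out the periodic itinerary of one UAV: one slot to move $g_1 \to g_{i_1}$; then, for $j=1,\dots,n$, a block of $t$ covering slots at $g_{i_j}$ followed by one move slot (to $g_{i_{j+1}}$, or back to $g_1$ when $j=n$); then $T_{charge}$ recharging slots; then repeat. Summing the legs gives the period $P = (n+1) + nt + T_{charge}$, namely $n+1$ travel slots, $nt$ covering slots, and $T_{charge}$ recharging slots, and in particular each UAV actually covers some subarea during only $nt$ of these $P$ slots.

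Next I would convert the continuity/hand-off requirement into a counting bound. Since $g_{i_1}$ must be covered in every slot and each visiting UAV stays exactly $t$ slots there (uniformity), the UAVs entering the cycle must be phase-shifted by exactly $t$ slots from one another; checking constraints~(5)--(8) one sees this same offset propagates consistently around the whole cycle, so the schedule is a genuine $t$-spaced ``conveyor belt''. If $K$ UAVs serve $C$, the pattern seen at $g_{i_1}$ repeats every $Kt$ slots, which cannot be shorter than the physical minimum period $P$; hence $Kt \ge (n+1) + nt + T_{charge}$, i.e. $K \ge n + \frac{n+1+T_{charge}}{t}$. As $K$ and the number of additional UAVs $k = K-n$ are integers, $k \ge \left\lceil \frac{n+1+T_{charge}}{t} \right\rceil$. (Equivalently one can argue by conservation: at every instant exactly $n$ of the $K$ UAVs are covering, so the remaining $k$ are traveling or charging, and averaging over a period gives the same count.)

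Finally I would eliminate $t$ via the per-cycle energy budget. Constraint~(10) applied to one traversal of $C$ reads $(n+1)\,E^{Travel} + n\,t\,E^{Cover} \le E$, with $E^{Travel}$ and $E^{Cover}$ the per-hop travel and per-slot coverage energies (constant under uniformity), so the largest admissible integer coverage time is $t \le t_{\max} := \left\lfloor \frac{E-(n+1)E^{Travel}}{n\,E^{Cover}} \right\rfloor$. Since $\left\lceil \frac{n+1+T_{charge}}{t} \right\rceil$ is nonincreasing in $t$, every feasible uniform schedule, whatever its $t$, uses at least $\left\lceil \frac{n+1+T_{charge}}{t_{\max}} \right\rceil$ additional UAVs, which is the claimed inequality.

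I expect the main obstacle to be making the lower bound genuinely tight rather than merely heuristic: one has to rule out that some choice of phases, of where the spare UAVs idle, or of slightly uneven timing could beat $\lceil P/t\rceil$ within the cyclic-uniform class, and this rests on pinning down the exact discrete hand-off timing forced by constraints~(5)--(8) together with the integrality of slot counts. A secondary step, needed to call this the \emph{minimum}, is to exhibit a matching schedule with $\lceil P/t_{\max}\rceil$ UAVs and to absorb any divisibility slack (for instance by letting some UAVs dwell at the station slightly longer than $T_{charge}$) --- the construction that Algorithm~1 is then built upon.
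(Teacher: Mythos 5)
Your argument is correct and is essentially the paper's own: you count one UAV's full tour period $(n+1)T + nT_{Coverage}/n\cdot n + T_{Charge}$ (equivalently, its round-trip downtime $(n+1)T + T_{Charge}$) and observe that the $k$ spare UAVs, entering the conveyor belt at the first subarea spaced $T_{Coverage}/n$ apart, must bridge that gap, which is exactly the inequality $k\,T_{Coverage}/n \geq (n+1)T + T_{Charge}$. The only additions beyond the paper's proof are your integrality/tightness caveats and the elimination of the per-subarea dwell time via the energy budget, which the paper defers to Algorithm~1 rather than folding into the theorem.
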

\begin{equation}
k\frac{T_{Coverage}}{n} \geq (n+1)T + T_{Charge}
\end{equation}
where $T_{Coverage}$ is the time that the UAV allocates to cover all subareas in the cycle, $T$ is the time that the UAV needs to travel from subarea $i$ to subarea $j$ and $T_{charge}$ is the time that the UAV needs to recharge the battery at the charging station.
\begin{proof}
	Consider that all $n$ subareas in the cycle are covered by $n$ UAVs and the UAV that covers the last subarea want to return to the charging station to recharge its battery. The handoff process needs to begin between one of the additional UAVs from the charging station and the UAV that covers the first subarea in the cycle.\\
The UAV that covers the last subarea needs to wait $(n-1)$ $T$ to do the handoff process, during this time the additional UAVs are covering the first subarea. After the handoff process is completed, the UAV needs $T$ time units to return to the charging station, $T_{charge}$ to recharge the battery and T to visit the first subarea in the cycle again. Then, we have:

\begin{equation}
k\frac{T_{Coverage}}{n} \geq (n-1)T +T+ T_{Charge}+T
\end{equation}
\qedhere
\end{proof}
\vspace{-10mm}
\subsection{The cycles with limited energy algorithm}
\label{subsec:Algorithm}

The straightforward method (SM) to continuously cover $N$ subareas is to allocate two UAVs for each subarea. At the first time slot, $N$ UAVs cover the $N$ subareas. Then, any UAV wants to return to the charging station to recharge the battery will do the handoff process with one of the additional UAVs that are available at the charging station. By applying SM, we need $N$ additional UAVs and we have $N$ cycles to cover all the subareas.\\
Our proposed algorithm, the cycles with limited energy algorithm (CLE), is inspired by the nearest neighbor algorithm, the nearest neighbor algorithm is used to solve the Traveling Salesman Problem~\cite{johnson1997traveling}, in which the salesman keeps visiting the nearest unvisited vertex until all the vertices are visited. In our algorithm, the UAV (salesman) has limited energy capacity and before visiting any new subarea, we must check if the remaining energy is enough to return to the charging station from the new location or not. In the previous subsection, we show how to find the minimum number of additional UAVs that are required to guarantee the continuous coverage for a given cycle, we use the Theorem~\ref{theorem_two} to find the minimum number of additional UAVs that are required to provide the continuous coverage for a given area, by finding the cycles that need only one additional UAV. The pseudo code of this algorithm is shown in Algorithm 1.
	\begin{algorithm}
		\begin{algorithmic}
			\STATE \textbf{Input:}
			\STATE The geografical area $G$=\{$g_{1}$,...,$g_{N}$\},
			\STATE $T$: The required time to travel between two subareas in the area,
			\STATE $E$: The energy capacity of the UAV,
			\STATE $T_{Charge}$: The time that the UAV needs to recharge the battery at charging station,
			\STATE $e$: The energy consumed by the UAV when it covers the subarea for 1 sec.
			\STATE $i$=1     // $i$ is the index of the cycle
			\STATE \textbf{START:}
			\WHILE {$G$ not empty}
			\STATE $c_i$=\{$g_{1}$\}
			\STATE \textbf{Do:}
			\STATE 1 $v$= most recently added subarea to cycle $c_i$
			\STATE 2 Find \{$g$\}= $argmin_{b\in G-\{v\}}$ $distance(v,b)$
			\STATE 3 Calculate  $E_{Coverage}$=$E$-$E_{Travel}$-$E_{Return to BS}$
			\STATE 4 Calculate $T_{Coverage}$ $=$ $\frac{E_{Coverage}}{e}$
			\STATE 5 \textbf{If} $\frac{T_{Coverage}}{|c_i|}$ $\geq$ $(|c_i|+1)T$ $+$ $T_{Charge}$ \textbf{then}
			\STATE 6 $c_i$ $\longleftarrow$ $c_i$ $\cup$ $\{g\}$
			\STATE 7 $G$ $\longleftarrow$ $G$ $\backslash$ $\{g\}$
			\STATE \textbf{while} ($\frac{T_{Coverage}}{|c_i|}$ $\geq$ $(|c_i|+1)T$ $+$ $T_{Charge}$ )	
			\STATE 8 $c_i$ $\longleftarrow$ $c_i$ $\cup$ $\{g_{1}\}$
			\STATE 9  $\textbf{C}$ $\longleftarrow$ $\textbf{C}$ $\cup$ $c_i$
			\STATE 10 $i$=$i$+1
			\ENDWHILE
			\STATE \textbf{Output:} \textbf{C} and $M^{'}$ $=$( ( $\displaystyle \sum_{c_i\in \textbf{C}}^{ } |c_i|-2))+ |\textbf{C}|  $
			
		\end{algorithmic}
		\caption{The cycles with limited energy algorithm}
	\end{algorithm}
	
\vspace{-2mm}		
\section{Performance Evaluation}
\label{sec:Experiments}
\vspace{-1mm}
		\subsection{Power Consumption Models}
		\label{subsec:Power_consumption}
		In this section, we quantify the power consumption by each UAV when it is hovering, traveling and transmitting data.
		\subsubsection{Power consumption during hovering}
		The power consumption in watt by the UAV during hovering can be given by~\cite{wannberg2012quadrotor}:
		\begin{equation}
		P=4\frac{T_{h}^{3/2}}{\sqrt{2QS}} + p
		\end{equation}
		
		where $T_{h}$ is the fourth of the quadcopter total weight in $N$, $Q$ is the density of the air in kg/$m^{3}$, $S$ is the rotor swept area in $m^{2}$ and $p$ is the power consumption of electronics in watt.
		
		\subsubsection{Power consumption during traveling}
		The power consumption in kW by the UAV during traveling can be given by~\cite{d2014guest}:
		\begin{equation}
		P=\frac{(m_{p}+m_{v})v}{370\eta r} + p
		\end{equation}
		where $m_{p}$ is the payload mass in kg, $m_{v}$ is the vehicle mass in kg, $r$ is the lift-to-drag ratio (equals 3 for vehicle that is capable of vertical takeoff and landing), $\eta$ is the power transfer efficiency for motor and propeller, $p$ is the power consumption of electronics in kW and $v$ is the velocity in km/h.
		
		\subsubsection{Power consumption during data transmission}
		The power consumption in dB by the UAV during data transmission can be given by~\cite{mozaffari2015drone}:
		\begin{equation}
		P_{t}(dB)=P_{r}(dB)+ \bar{L}(R,h)
		\end{equation}
		\begin{equation}
		\bar{L}(R,h)=P(LOS)\times L_{LOS}+P(NLOS)\times L_{NLOS}
     	\end{equation}
     	\begin{equation}
     	P(LOS)=\dfrac{1}{1+\alpha            .exp(-\beta[\frac{180}{\pi}\theta-\alpha])}
     	\end{equation}
     	\begin{equation}
     	L_{LOS}(dB)=20log(\dfrac{4\pi f_{c}d}{c})+\xi_{LOS}
     	\end{equation}
     	\begin{equation}
     	L_{NLOS}(dB)=20log(\dfrac{4\pi f_{c}d}{c})+\xi_{NLOS}
     	\end{equation}
     	
		In equation (5), $P_{t}$ is the transmit power, $P_{r}$ is the required received power to achieve a SNR greater than threshold $\gamma_{th}$, $\bar{L}(R,h)$ is the average path loss as a function of the altitude $h$ and coverage radius $R$.
		
		In equation (6), $P(LOS)$ is the probability of having line of sight (LOS) connection at an evaluation angle of $\theta$, $P(NLOS)$ is the probability of having non LOS connection and equal (1-$P(LOS)$), $L_{LOS}$ and $L_{NLOS}$ are the average path loss for LOS and NLOS paths.
		
		In equations (7), (8) and (9), $\alpha$ and $\beta$ are constant values which depend on the environment, $f_{c}$ is the carrier frequency, $d$ is the distance between the UAV and the user, $c$ is the speed of the light , $\xi_{LOS}$ and  $\xi_{NLOS}$ are the average additional loss which depend on the environment.
		
		Actually, the power consumed by the UAV during
		data transmission and reception is much smaller than the power consumed during the UAV hovering or traveling~\cite{guptasurvey}. In this paper, we assume that the power wasted during data transmission is constant.
		\vspace{-3mm}
		\subsection{Simulation Setup}
		\label{subsec:simulation_results}
		Given a geographical area $G$ , the number of the subareas that we need to cover and the density of the users, the question here is how to find the optimal boundaries of the subareas that to be covered by the UAVs. To answer this question, the authors of~\cite{mozaffari2016optimal} used transport theory to find the optimal boundaries of the subareas. Unfortunately, this approach needs to solve ${N \choose 2}$ non-linear equations at each iteration, where $N$ is the number of subareas.
		In this paper, we divide the geographical area uniformly and apply the SM and CLE algorithm to find the minimum number of additional UAVs that provides the continuous coverage. We study the effect of the UAV energy capacity, the grid size of the geographical area, the charging time and the traveling time on the number of the additional UAVs.
		Table II lists the parameters used in the numerical analysis~\cite{sentinel}.
			\begin{table}[!h]
				\scriptsize
				\renewcommand{\arraystretch}{1.3}
				\caption{Parameters in numerical analysis}
				\label{table}
				\centering
				\begin{tabular}{|c|c|}
					\hline
					UAV energy capacity & 0.88kW.h \\
					\hline
					Power consumption by the electronics&0.15kW\\
					\hline
					Grid size & 4x4\\
					\hline
					Area of the graph & 1kmx1km\\
					\hline
					Traveling time through edge & 2.5 min\\
					\hline
					Charging station location (x,y)& (0,0)\\
					\hline
					Charging time & 5 min\\
					\hline
					UAV weight with battery & 8.5 k.g\\
					\hline
					Maximum payload  weight & 2 k.g\\
					\hline
					MAX forward speed & 12 m/s\\
					\hline
				\end{tabular}
			\end{table}
		
		\begin{figure}[!h]
			\centering
			\includegraphics[scale=0.43]{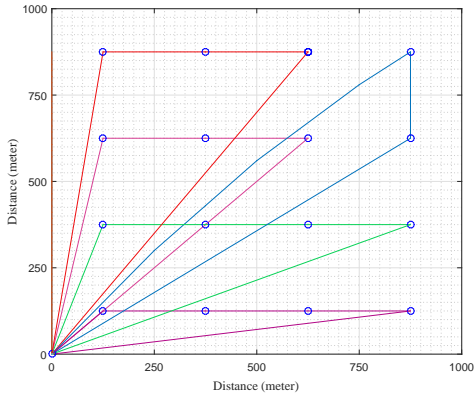}
			\caption{The cycles that cover the subareas using the CLE algorithm}
			\label{fig:cycles}
		\end{figure}
		\begin{figure}[!h]
			\centering
			\includegraphics[scale=0.43]{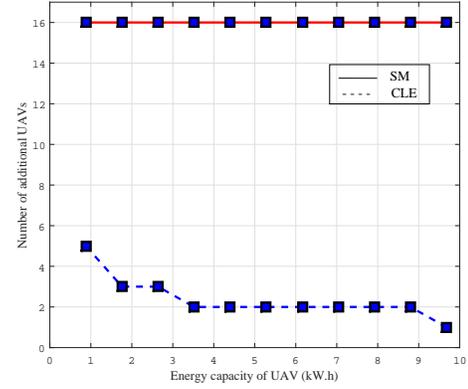}
			\caption{Energy capacity vs. The number of addtional UAVs}
			\label{fig:capacity}
		\end{figure}
			\begin{figure}[!h]
				\centering
				\includegraphics[scale=0.43]{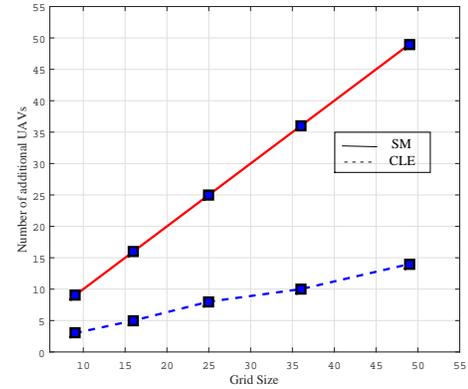}
				\caption{Grid size vs. The number of additional UAVs}
\label{fig:Gridsize}				
			\end{figure}
			
			In Figure~\ref{fig:cycles}, we uniformly divide the geographical area into 16 subareas and apply the CLE algorithm to find the cycles with minimum number of  additional UAVs. From the figure, we notice that 5 cycles are needed to cover all subareas with 5 additional UAVs. Also, we note that the paths of the cycles are intersected in many locations. To avoid the collisions between the UAVs, we operate the paths (cycles) at different altitudes with small altitude differences.
			
			In Figure~\ref{fig:capacity}, we study the effect of the UAV energy capacity on the number of the  additional UAVs needed to cover the subareas. Actually, when increasing the energy capacity of the UAV and apply SM, the number of additional UAVs needed will not change because each subarea is covered by one cycle and two UAVs. When increasing the energy capacity of the
			UAVs, only the coverage time of each UAV  increases. on the other hand, increasing the energy capacity of each UAV results in minimizing the number of additional UAVs that needed using CLE. This is because increasing the energy capacity of each UAV gives the UAV a chance to visit and to cover more subareas, which minimizes the number of the cycles that are needed to cover the subareas.

			\begin{figure}[!t]
				\centering
				\includegraphics[scale=0.43]{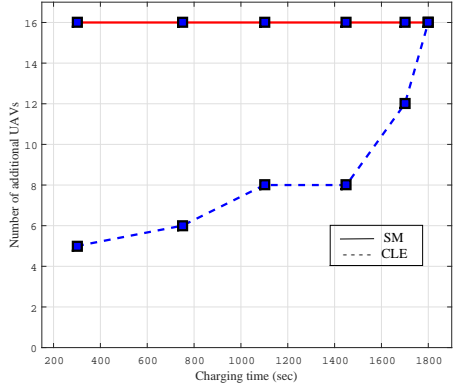}
				\caption{Charging time vs. The number of additional UAVs}
				\label{fig:Charging}
			\end{figure}	 	
			\begin{figure}[!h]
				\centering
				\includegraphics[scale=0.43]{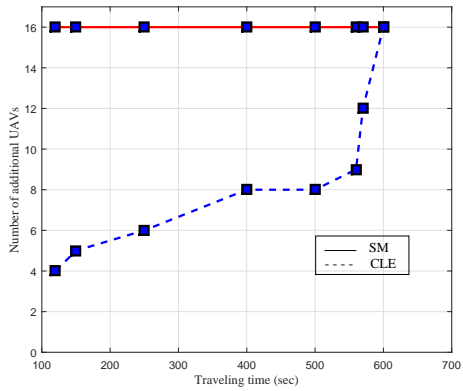}
				\caption{Traveling time vs. The number of additional UAVs}
				\label{fig:Traveling}
			\end{figure}
			In Figure~\ref{fig:Gridsize}, the slope of the line produced by SM is greater than the curve of CLE. When applying SM, the number of additional UAVs increases linearly with the grid size. This is because the number of additional UAVs equals the grid size. Also, when applying the CLE, the number of additional UAVs increases with the grid size. This is because more cycles are needed to cover more subareas and each cycle will need one additional UAV.
			
			In Figure~\ref{fig:Charging}, we study the effect of the charging time on the number of additional UAVs needed. Changing the charging time will	not affect the number of additional UAVs needed when applying SM. This is because the coverage time of each UAV will cover the time that the UAV needs to return to the charging station to recharge the battery and to visit the subarea again. On the other hand, when applying CLE, it will be a critical issue (see Theorem~\ref{theorem_two}). Actually, charging the battery of the UAV takes long time. For this reason, each UAV has a replacement battery~\cite{sentinel}. In this paper, we assume  the time needed to replace the battery for each UAV is 5 minutes.
			
			In Figure~\ref{fig:Traveling}, we study the effect of the traveling time on the number of additional UAVs. Changing the traveling time will not affect the number of additional UAVs when applying SM. On the other hand, it will be a critical issue to choose the appropriate traveling time when applying CLE. When increasing the traveling time, the wasted energy during traveling will increase and the coverage time will decrease. Hence, the chance to visit other subareas will decrease.

\section{Conclusion}
\label{sec:Conclusion}
In this paper, we study the continuous coverage problem with minimum number of replacement UAVs and prove that it is NP complete. We design an efficient algorithm to solve it, the cycles with limited energy algorithm. The proposed algorithm covers the $N$ subareas by cycles, in which each cycle needs one additional UAV to ensure continuous coverage. We showed that the energy capacity of the UAVs, the number of subareas in the affected area, and the UAV charging and traveling times will all impact the required number of UAVs. Our simulation results showed that applying the cycles with limited energy algorithm, can efficiently reduce the number of additional UAVs needed relative to the straightforward method. As future work, we will study the continuous coverage problem using UAVs with non-uniform energy capacities and the use of green energy.
\section*{Acknowledgment}
The work of Jacob Chakareski has been partially supported by the NSF under award CCF-1528030.

\bibliographystyle{IEEEtran}
\bibliography{UAV_conference}

\begin{thebibliography}{10}
\providecommand{\url}[1]{#1}
\csname url@samestyle\endcsname
\providecommand{\newblock}{\relax}
\providecommand{\bibinfo}[2]{#2}
\providecommand{\BIBentrySTDinterwordspacing}{\spaceskip=0pt\relax}
\providecommand{\BIBentryALTinterwordstretchfactor}{4}
\providecommand{\BIBentryALTinterwordspacing}{\spaceskip=\fontdimen2\font plus
\BIBentryALTinterwordstretchfactor\fontdimen3\font minus
  \fontdimen4\font\relax}
\providecommand{\BIBforeignlanguage}[2]{{%
\expandafter\ifx\csname l@#1\endcsname\relax
\typeout{** WARNING: IEEEtran.bst: No hyphenation pattern has been}%
\typeout{** loaded for the language `#1'. Using the pattern for}%
\typeout{** the default language instead.}%
\else
\language=\csname l@#1\endcsname
\fi
#2}}
\providecommand{\BIBdecl}{\relax}
\BIBdecl

\bibitem{bupe2015relief}
P.~Bupe, R.~Haddad, and F.~Rios-Gutierrez, ``Relief and emergency communication
  network based on an autonomous decentralized uav clustering network,'' in
  \emph{SoutheastCon 2015}.\hskip 1em plus 0.5em minus 0.4em\relax IEEE, 2015,
  pp. 1--8.

\bibitem{miller2006hurricane}
R.~Miller, ``Hurricane katrina: Communications \& infrastructure impacts,''
  DTIC Document, Tech. Rep., 2006.

\bibitem{townsend2006federal}
F.~F. Townsend \emph{et~al.}, ``The federal response to hurricane katrina:
  Lessons learned,'' \emph{Washington, DC: The White House}, 2006.

\bibitem{noam2004world}
E.~M. Noam, ``What the world trade center attack has shown us about our
  communications networks,'' \emph{Global Economy and Digital Society.
  Amsterdam: Elsevier Science}, pp. 375--378, 2004.

\bibitem{dalmasso2012wimax}
I.~Dalmasso, I.~Galletti, R.~Giuliano, and F.~Mazzenga, ``Wimax networks for
  emergency management based on uavs,'' in \emph{IEEE First AESS European
  Conference on Satellite Telecommunications (ESTEL)}.\hskip 1em plus 0.5em
  minus 0.4em\relax IEEE, 2012, pp. 1--6.

\bibitem{ha2010uav}
T.~Ha, ``The uav continuous coverage problem,'' DTIC Document, Tech. Rep.,
  2010.

\bibitem{wu2014collaborative}
J.~Wu, ``Collaborative mobile charging and coverage,'' \emph{Journal of
  Computer Science and Technology}, vol.~29, no.~4, pp. 550--561, 2014.

\bibitem{zhang2015collaborative}
S.~Zhang, J.~Wu, and S.~Lu, ``Collaborative mobile charging,'' \emph{IEEE
  Transactions on Computers}, vol.~64, no.~3, pp. 654--667, 2015.

\bibitem{pugliese2015modelling}
L.~D.~P. Pugliese, F.~Guerriero, D.~Zorbas, and T.~Razafindralambo, ``Modelling
  the mobile target covering problem using flying drones,'' \emph{Optimization
  Letters}, pp. 1--32, 2015.

\bibitem{mozaffari2015drone}
M.~Mozaffari, W.~Saad, M.~Bennis, and M.~Debbah, ``Drone small cells in the
  clouds: Design, deployment and performance analysis,'' in \emph{IEEE Global
  Communications Conference (GLOBECOM)}, 2015, pp. 1--6.

\bibitem{mozaffari2016optimal}
------, ``Optimal transport theory for power-efficient deployment of unmanned
  aerial vehicles,'' \emph{IEEE International Conference on Communications
  (ICC), Kuala Lumpur, Malaysia,}, 2016.

\bibitem{zorbas2013energy}
D.~Zorbas, T.~Razafindralambo, F.~Guerriero \emph{et~al.}, ``Energy efficient
  mobile target tracking using flying drones,'' \emph{Procedia Computer
  Science}, vol.~19, pp. 80--87, 2013.

\bibitem{ceran2015optimal}
E.~T. Ceran, T.~Erkilic, E.~Uysal-Biyikoglu, T.~Girici, and K.~Leblebicioglu,
  ``Optimal energy allocation policies for a high altitude flying wireless
  access point,'' \emph{Transactions on Emerging Telecommunications
  Technologies}, 2016.

\bibitem{guptasurvey}
L.~Gupta, R.~Jain, and G.~Vaszkun, ``Survey of important issues in uav
  communication networks,'' \emph{IEEE Communications Surveys and Tutorials},
  2015.

\bibitem{korf2002new}
R.~E. Korf, ``A new algorithm for optimal bin packing,'' in \emph{AAAI/IAAI},
  2002, pp. 731--736.

\bibitem{johnson1997traveling}
D.~S. Johnson and L.~A. McGeoch, ``The traveling salesman problem: A case study
  in local optimization,'' \emph{Local search in combinatorial optimization},
  vol.~1, pp. 215--310, 1997.

\bibitem{wannberg2012quadrotor}
M.~Wannberg, ``The quadrotor platform: from a military point of view,'' 2012.

\bibitem{d2014guest}
R.~D'Andrea, ``Guest editorial can drones deliver?'' \emph{IEEE Transactions on
  Automation Science and Engineering}, vol.~11, no.~3, pp. 647--648, 2014.

\bibitem{sentinel}
``Sentinel+ drone, http://www.airbornedrones.co/,'' 2016.

\end{thebibliography}

\end{document}